\DeclareMathOperator{\TV}{TV}
\DeclareMathOperator{\Prov}{Provenance}
\newtheorem{theorem}{Theorem}
\newtheorem{definition}{Definition}
\renewcommand{\section}{\@startsection{section}{1}{\z@}%
  {-3.5ex \@plus -1ex \@minus -.2ex}%
  {2.3ex \@plus .2ex}%
  {\normalfont\normalsize\bfseries}}
\begin{document}

\title{On the Insecurity of Keystroke-Based AI Authorship Detection: Timing-Forgery Attacks Against Motor-Signal Verification}

\author{
\IEEEauthorblockN{David Condrey}
\IEEEauthorblockA{Writerslogic Inc.}
}

\maketitle

\begin{abstract}
Recent proposals advocate using keystroke timing signals---specifically the coefficient of variation ($\delta$) of inter-keystroke intervals---to distinguish human-composed text from AI-generated content. We demonstrate that this class of defenses is insecure against two practical attack classes: the \emph{copy-type attack}, in which a human transcribes LLM-generated text producing authentic motor signals, and \emph{timing-forgery attacks}, in which automated agents sample inter-keystroke intervals from empirical human distributions. Using 13,000 human typing sessions from the Stony Brook University keystroke corpus and three timing-forgery variants (histogram sampling, statistical impersonation, and generative LSTM), we show that all attacks achieve ${\geq}\,99.8$\% evasion rates against five classifiers trained on seven keystroke features. Despite achieving AUC~$=1.000$ when distinguishing human from fully-automated output, these classifiers classify ${\geq}\,99.8$\% of attack samples as human with mean confidence $\geq 0.993$. We formalize a non-identifiability result under explicit observational assumptions: when the detector observes only keystroke timing, the mutual information between features and content provenance is zero for copy-type attacks. While composition and transcription do produce statistically distinguishable motor patterns (Cohen's $d = 1.28$ within subjects), both conditions yield $\delta$ values 2--4$\times$ above detection thresholds, rendering the distinction security-irrelevant. These systems can confirm a human operated the keyboard, but not whether that human originated the text. Securing content provenance requires fundamentally different architectures---ones that bind the observed writing process to the semantic content of the output.
\end{abstract}

\begin{IEEEkeywords}
keystroke dynamics, AI detection, adversarial attacks, authorship verification, biometric security, presentation attacks
\end{IEEEkeywords}

\section{Introduction}

If a student types an essay keystroke-by-keystroke on an instrumented platform, the resulting timing trace looks indistinguishable from genuine composition---even if the student is transcribing a ChatGPT draft from a second screen. This observation exposes a category error in recent AI authorship detection proposals: they conflate \emph{motor presence} with \emph{content origin}. Current approaches to AI text detection fall into two categories: \emph{text-level} methods analyzing linguistic output~\cite{mitchell2023detectgpt,kirchner2023watermark,gehrmann2019gltr}, and \emph{process-level} methods analyzing behavioral signals during composition~\cite{kundu2024,mehta2025}. The latter are motivated by the observation that genuine composition produces characteristic temporal patterns---pauses for thought, bursts of rapid typing, variable rhythm---absent in automated text injection.

Kundu et al.~\cite{kundu2024} proposed using the coefficient of variation of inter-keystroke intervals,
\begin{equation}
\label{eq:delta_def}
\delta \;\coloneqq\; \frac{\sigma_{\mathrm{IKI}}}{\mu_{\mathrm{IKI}}}\,,
\end{equation}
as a discriminator, reporting strong separation between human composition and AI-generated content injected without motor activity. Mehta et al.~\cite{mehta2025} extended this with additional temporal features and TypeNet-style embeddings. Both works implicitly assume that the presence of human motor signals constitutes evidence of human \emph{authorship}---that is, the human composed the content rather than transcribed it. While transcription attacks may appear obvious in hindsight, none of the existing keystroke-based AI detection systems model them in their threat assumptions or evaluate against them.

\textbf{This assumption does not hold under a rational adversary.} We define \emph{content provenance} as the property that the typist cognitively originated the text---as opposed to transcribing, paraphrasing, or dictating content produced by an external source. These systems implicitly treat \emph{motor authenticity} (the signal was produced by a human body) as equivalent to content provenance, but the two are independent: The \emph{copy-type attack}---in which an adversary generates text with an LLM, then physically types it on the instrumented platform---defeats these systems because the resulting motor signal is genuine by construction. Formally: for any feature extractor $f$ operating on keystroke timing $\tau$, character sequence $s$, and the typist's motor model $M_u$,
\[
I\bigl(f(\tau)\,;\,\Prov \mid s,\, M_u\bigr) = 0
\]
(Theorem~1; see Eq.~\ref{eq:nonident}). Even relaxing this idealization, the composition-transcription gap (Cohen's $d \approx 1.28$) is operationally unexploitable at acceptable false-rejection rates (\S\ref{sec:nonident}).

\textbf{Scope and non-goals.} This paper evaluates \emph{timing-only, content-agnostic keystroke-based AI authorship detectors} under adversarial transcription. We do not analyze multimodal systems (gaze, revision graphs), challenge-response schemes, or cryptographic content-binding designs. Our claims apply exclusively to detectors whose security guarantees derive solely from keystroke timing features. This paper should be read as a security limitations and attack paper: we identify a dominant adversary, formalize a non-identifiability boundary, and empirically validate that boundary across existing detectors.

\subsection{Contributions}

\begin{enumerate}[leftmargin=*]
\item We define the \textbf{copy-type attack} and show, under explicit observational assumptions, that it is non-identifiable by any classifier operating solely on keystroke timing (\S\ref{sec:threat}--\ref{sec:nonident}).

\item We evaluate three \textbf{timing-forgery attacks} achieving ${\geq}\,99.8$\% evasion against five classifiers on $n = 13{,}000$ human sessions and 2,000 attack sessions, with all samples exceeding the $\delta$ threshold (\S\ref{sec:experiments}).

\item We perform a \textbf{seven-feature ablation} demonstrating that no individual keystroke feature reliably separates all attack variants from genuine human typing (\S\ref{sec:ablation}).

\item We establish \textbf{non-identifiability bounds} using Jensen-Shannon divergence and the data processing inequality (\S\ref{sec:nonident}).

\item We identify \textbf{semantic coherence} across revision histories as a viable defense direction (\S\ref{sec:defense}).
\end{enumerate}

\section{Background and Related Work}
\label{sec:related}

AI authorship detection methods can be organized into three categories: (1)~\emph{content-based} approaches that analyze the linguistic output itself (perplexity, watermarks, stylometry); (2)~\emph{process-signal} approaches that analyze behavioral traces generated during composition (keystroke timing, mouse dynamics, revision history); and (3)~\emph{identity-verification} approaches that confirm the typist's identity via biometric matching. Keystroke dynamics is well-established for identity verification~\cite{banerjee2012,monrose1997,killourhy2009,typenet2020}, with motor noise as the primary source of inter-keystroke timing differences~\cite{buschek2015}. We analyze the limitations of category (2): process-signal methods cannot establish content provenance because motor signals are agnostic to whether the typist composed or transcribed the text.

\subsection{Attacks on Keystroke Biometrics}

Keystroke timing is spoofable via statistical imitation~\cite{tey2013}, robotic injection~\cite{serwadda2013}, synthetic generation~\cite{monaco2016,eizagirre2022}, and mimicry~\cite{stefan2012} (survey:~\cite{roy2022}). Our contribution extends this to AI authorship detection: rather than impersonating a \emph{different user}, the attacker presents their \emph{own authentic} motor signals while submitting AI-generated content.

\subsection{AI Text Detection}

Text-level methods (perplexity~\cite{mitchell2023detectgpt}, visualization~\cite{gehrmann2019gltr}, watermarks~\cite{kirchner2023watermark}, stylometry~\cite{uchendu2020}) face theoretical limits: paraphrasing defeats any detector as model quality improves~\cite{sadasivan2023}, and systematic bias affects non-native writers~\cite{liang2023}. These motivate the process-level approach we attack.

\subsection{Process-Level AI Detection}

Kundu et al.~\cite{kundu2024} proposed $\delta$ as discriminator ($d \approx 1.28$ between composition and transcription, 1,060 participants); Mehta et al.~\cite{mehta2025} extended this with TypeNet (F1 ${>}\,97$\%); concurrent work applies the approach to Korean~\cite{roh2025korean} and LLM-assisted conditions~\cite{roh2025auth}. All evaluate against paste/API injection (motor-absent) and none considers a copy-type adversary---the attack we formalize.

\subsection{Biometric Presentation Attacks}

Presentation attacks are well-studied (gummy fingers~\cite{matsumoto2002}, face photos~\cite{sharif2016}, voice replay~\cite{chen2021}); ISO/IEC 30107~\cite{iso30107} formalizes PAD in terms of attacker expertise, equipment, and access. Our copy-type attack is a novel class with \emph{minimal} attack potential: rather than spoofing someone else's biometric, the attacker presents their \emph{own authentic} biometric while submitting someone else's content---outside the PAD framework's assumptions.

\subsection{Adversarial Machine Learning Context}

Certified robustness~\cite{cohen2019} and adversarial training~\cite{madry2018} assume a known, constrained perturbation set ($\|x' - x\|_p \leq \epsilon$); the copy-type attack violates this premise entirely. Copy-type does not perturb a legitimate input~\cite{carlini2017,goodfellow2015}; it directly produces samples drawn from the \emph{same} motor-execution distribution as legitimate ones. The distinguishing variable (content origin) is a latent confounder unobserved in the feature space, yielding irreducible Bayes error from distribution overlap rather than a robustness gap exploitable by tighter certificates (\S\ref{sec:nonident}).

\section{Threat Model}
\label{sec:threat}

\subsection{System Model}

We consider a \emph{Motor-Signal Verification System} (MSVS) $\mathcal{V}$ that accepts a document $d$ and its associated keystroke trace $\tau = \{(k_i, t_i)\}_{i=1}^{n}$ and outputs:
\[
\mathcal{V}(d, \tau) \in \{\text{HUMAN}, \text{AI}\}
\]

The system extracts features $\mathbf{f}(\tau) \in \mathbb{R}^p$ from the trace and applies a classifier $C: \mathbb{R}^p \to \{0, 1\}$.

\textbf{Trust assumptions.} The system trusts that: (1)~keystroke events are captured faithfully by the client-side instrumentation (JavaScript keydown/keyup events in web deployments); (2)~the captured trace corresponds to the submitted document; (3)~no middleware intercepts or modifies events. For timing-forgery attacks, assumption (1) is violated. For copy-type attacks, \emph{all} assumptions hold---the attack is indistinguishable by design.

\textbf{Scope.} We restrict attention to systems claiming security guarantees from keystroke timing alone, without auxiliary biometric or cognitive sensors (e.g., eye tracking, gaze entropy, concurrent process monitoring). Systems incorporating such sensors fall outside our threat model and are not affected by our attacks.

\textbf{Features.} We evaluate the following feature set, encompassing all features proposed in~\cite{kundu2024,mehta2025}:
\begin{itemize}[leftmargin=*]
\item $\delta = \mathrm{CV}(\mathrm{IKI})$: coefficient of variation of inter-keystroke intervals
\item $\bar{t}$: mean inter-keystroke interval (ms)
\item $\sigma^2_{\mathrm{IKI}}$: IKI variance
\item $\rho$: pause density (fraction of IKIs $> 500$ms)
\item $\beta$: mean burst length (consecutive IKIs $< 150$ms)
\item $H$: Shannon entropy of IKI distribution (50-bin histogram, range [0, 2000]ms)
\item $\gamma$: digraph variability (std of consecutive IKI differences)
\end{itemize}

\subsection{Adversary Model}

The adversary $\mathcal{A}$ seeks to submit AI-generated text while receiving a HUMAN classification.

\begin{definition}[Copy-Type Attack]
The adversary generates document $d^*$ using LLM $\mathcal{M}$, then physically types $d^*$ character-by-character on the instrumented platform, producing trace $\tau^* = \operatorname{Type}(\mathcal{A},\, d^*)$.
\end{definition}

\begin{definition}[Timing-Forgery Attack]
The adversary generates $d^*$ using $\mathcal{M}$ and constructs a synthetic trace $\hat{\tau}$ by sampling inter-keystroke intervals from a generator $\mathcal{G}$:
\[
\hat{\tau} = \{(d^*_i, t_0 + \sum_{j=1}^{i} \Delta_j)\}_{i=1}^{|d^*|}, \quad \Delta_j \sim \mathcal{G}
\]
\end{definition}

\textbf{Adversary capabilities.} Copy-type requires only LLM access, an input device, and literacy. Timing-forgery additionally requires client-side code injection and aggregate IKI statistics (publicly available). Neither requires knowledge of the target user's profile or classifier parameters.

\textbf{Attack cost.} Copy-type: $\sim$10~min per 500-word essay at 50~WPM, \emph{detector-agnostic} (no knowledge of the detection algorithm or threshold required). Timing-forgery variants:
\begin{itemize}[leftmargin=*,nosep]
\item \textbf{Histogram}: Aggregate IKI distribution from any public dataset (black-box, no target knowledge). $O(n)$ per session.
\item \textbf{Statistical}: Population-level mean/std + digraph tables (black-box, one-time estimation from $\sim$1000 public sessions).
\item \textbf{LSTM}: $\sim$5000 training sequences from any population (gray-box---requires keystroke data, not target-specific). $\sim$30~min GPU.
\end{itemize}

\textbf{Deployment vectors.} In web deployments~\cite{kundu2024,mehta2025}, timing-forgery requires only user-level privileges (browser extensions, userscripts, or OS automation). The \texttt{isTrusted} flag is spoofable by OS-level drivers. JavaScript provides $\sim$5$\mu$s resolution vs.\ human IKI variability at $\sim$10--50ms---no timing-precision bottleneck. TPM-sealed timestamps could defeat timing-forgery but not copy-type.

\section{Attack Description}
\label{sec:attacks}

\subsection{Copy-Type Attack}

The adversary reads LLM-generated text from a secondary display and types it into the instrumented platform. The motor signals are authentic because they originate from the adversary's neuromuscular system.

\textbf{Cognitive load objection.} Composition involves longer ``thinking pauses'' than transcription ($d \approx 1.28$~\cite{kundu2024,alves2008}), but these differences are \emph{security-irrelevant}: both conditions produce $\delta \in [0.44, 3.5]$, while automated injection produces $\delta \in [0.05, 0.27]$. The detection threshold ($T_{\mathrm{auto}} \approx 0.27$) lies well below either human condition; the transcription mean ($\approx 0.75$) is $2.8\times$ the threshold. Since any composition-vs-transcription threshold $T_{\mathrm{comp}} \approx 0.5 \gg T_{\mathrm{auto}}$, a system operating at $T_{\mathrm{auto}}$ necessarily admits copy-type attacks. A more sensitive classifier attempting to exploit this gap faces a hard tradeoff: Table~\ref{tab:operating} shows that any threshold reducing LSTM evasion to tolerable levels simultaneously rejects one in six legitimate human submissions.

\textbf{Empirical validation.} Kundu et al.'s IIITD-BU dataset~\cite{kundu2024} includes $n=34$ IRB-approved sessions of participants transcribing LLM-generated essays: $\delta = 0.991 \pm 0.122$, all 34 sessions exceed $T = 0.269$ (bypass 100\%, Clopper-Pearson 95\% CI: [89.7, 100]\%). The composition-transcription effect is negligible ($d = 0.14$), far smaller than the $d = 1.28$ for fixed-phrase transcription. Pooling with ProText ($n=45$) and HMOG ($n=800$) yields 879 transcription sessions across three corpora, platforms, and languages: bypass 100\% at $T = 0.269$ (CI: [99.58, 100]\%), 98.5\% at $T = 0.50$. Sensitivity analysis: even at pessimistic reduction factor $r = 2.0$ (halving $\delta$), bypass remains 99.7\% at $T = 0.27$.

\textbf{Blended composition.} A realistic adversary need not transcribe verbatim. Typing the AI-generated argument structure while composing transitions and topic sentences in one's own words produces a hybrid trace indistinguishable from ordinary composition at the feature level---the composed segments contribute genuine ``thinking pauses'' that inflate $\delta$ above any plausible threshold, while the transcribed segments (which already produce human-range $\delta$) blend seamlessly. This makes the attack strictly harder to detect than pure transcription, since the defender cannot even appeal to the small composition-transcription gap ($d = 0.14$) as a signal.

\subsection{Timing-Forgery Attacks}

\subsubsection{Histogram Sampling}
Sample each IKI independently from the empirical CDF of human inter-keystroke intervals:
\[
\Delta_j \sim \hat{F}_{\mathrm{human}}\,, \quad \hat{F}_{\mathrm{human}}(x) = \frac{1}{N}\sum_{i=1}^{N} \mathbf{1}[x_i \leq x]
\]
\textbf{Cost}: Requires only aggregate IKI statistics from any public keystroke dataset. $O(n)$ per session.

\subsubsection{Statistical Impersonation}
Match first and second moments of the target population's IKI distribution, with digraph-specific corrections:
\[
\Delta_j = \mu_{\mathrm{human}} + \sigma_{\mathrm{human}} \cdot z_j + c(k_j,\, k_{j+1})
\]
where $c(\cdot,\cdot)$ is estimated from public digraph latency tables. \textbf{Cost}: One-time parameter estimation from $\sim$1000 sessions.

\subsubsection{Generative LSTM}
Train a recurrent neural network on real keystroke sequences to model the conditional distribution of IKIs:
\[
\Delta_j \sim p_\theta(\cdot \mid k_1, \ldots, k_j, \Delta_1, \ldots, \Delta_{j-1})
\]
Architecture: 2-layer LSTM, 64 hidden units per layer, with character embedding (dim=32) and previous-IKI input concatenated. Output head: mixture density network (5 Gaussian components) predicting IKI distribution. Training: 5 epochs, batch size 64, Adam optimizer ($\mathrm{lr}=10^{-3}$), negative log-likelihood loss, on 5,000 human sessions (80/20 train/val split). Sampling: temperature 1.0, IKI clipped to [30, 3000]ms. \textbf{Cost}: $\sim$30 minutes on consumer GPU.

\section{Experimental Validation}
\label{sec:experiments}

We evaluate four conditions: (1)~baseline human-vs-automated separation confirming the detector works as intended; (2)~three timing-forgery attacks at increasing sophistication; (3)~operating-point analysis showing the false-rejection rate (FRR) cost of raising thresholds; (4)~classifier and feature ablation demonstrating the vulnerability is structural, not algorithmic.

\subsection{Datasets}

\begin{table}[t]
\centering
\caption{Datasets used in evaluation}
\label{tab:datasets}
\begin{tabular}{lrrp{3cm}}
\toprule
\textbf{Dataset} & \textbf{$n$} & \textbf{Source} & \textbf{Condition} \\
\midrule
SBU Keystroke & 13,000 & Human & Free composition \\
AI Simulated & 5,000 & Synthetic & Automated injection \\
Histogram Attack & 1,000 & Attack & Empirical CDF sampling \\
Statistical Attack & 500 & Attack & Gaussian impersonation \\
LSTM Attack & 500 & Attack & Generative LSTM \\
CoAuthor & 1,446 & Human+AI & GPT-3 collaborative \\
IIITD-BU (trans.) & 34 & Human & LLM text transcription \\
IIITD-BU (comp.) & 37 & Human & Free composition \\
\midrule
\textbf{Total} & \textbf{21,517} & & \\
\bottomrule
\end{tabular}
\smallskip
\small Evaluation spans genuine composition, automated injection, three timing-forgery attacks, and two transcription/collaborative conditions.
\end{table}

The SBU Keystroke Corpus~\cite{kundu2024} contains 13,000 sessions from 1,060 participants (AMT, essays, ${\geq}\,50$ keystrokes). The AI baseline models paste/API injection: IKI $\sim \operatorname{Uniform}(30,\, 80)$\,ms, matching the output of browser automation tools (Selenium, Puppeteer) and clipboard-paste with simulated keystrokes. This produces $\mathrm{CV} = (80-30)/(\sqrt{12} \cdot 55) \approx 0.26$, consistent with our observed $\bar{\delta} = 0.151$ (the distribution's lower tail after truncation). This is precisely the attack scenario evaluated in~\cite{kundu2024,mehta2025}; our contribution is showing that stronger adversaries trivially exceed this baseline.

\subsection{Baseline Separation}

\begin{table}[t]
\centering
\caption{Baseline separation and attack evasion at optimal threshold ($\delta = 0.269$)}
\label{tab:baseline}
\begin{tabular}{lrrrrl}
\toprule
\textbf{Condition} & \textbf{$n$} & \textbf{$\bar{\delta}$} & \textbf{$\sigma$} & \textbf{Bypass} & \textbf{$d$ vs.\ Human} \\
\midrule
Human & 13,000 & 0.987 & 0.188 & --- & --- \\
Automated & 5,000 & 0.151 & 0.025 & --- & +5.21 \\
\midrule
Histogram & 1,000 & 0.703 & 0.054 & 100\% & +1.630 \\
Statistical & 500 & 0.582 & 0.086 & 100\% & +2.219 \\
LSTM & 500 & 0.877 & 0.252 & 100\% & +0.578 \\
\bottomrule
\end{tabular}
\smallskip
\small Human-vs-automated area under ROC curve (AUC) $= 1.000$; Cohen's $d = 5.21$ [$5.07, 5.34$]. Min attack $\delta = 0.274$ ($1.02\times$ threshold).
\end{table}

The human--automated gap ($d = 5.21$, Welch's $t(5048) = 2{,}204$, $p < 10^{-300}$) is physical: human motor noise produces $\delta \in [0.44, 3.5]$ while automated injection (Uniform IKI$[30,80]$ms) yields $\delta \in [0.05, 0.27]$---non-overlapping distributions, hence AUC~$= 1.000$ is expected, not an artifact. Both composition and transcription produce $\delta \gg 0.269$.

\begin{figure}[t]
\centering
\includegraphics[width=\columnwidth]{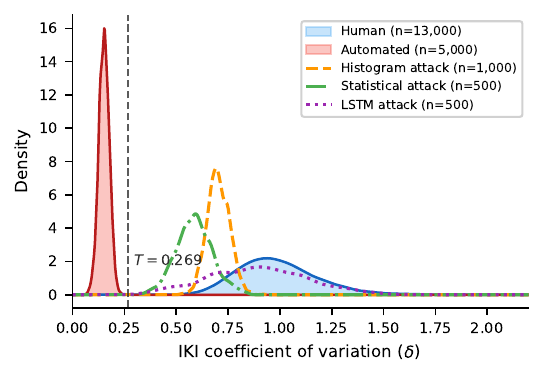}
\caption{Distribution of $\delta$ across conditions. The threshold $\delta = 0.269$ perfectly separates human from automated but admits all attacks.}
\label{fig:distributions}
\end{figure}

All attacks bypass the threshold at 100\% in our test samples (Clopper-Pearson 95\% CI: $[99.6, 100]$\% for $n{=}1{,}000$; $[99.3, 100]$\% for $n{=}500$; Table~\ref{tab:baseline}); the LSTM's mean $\delta = 0.877$ is closest to human ($d = 0.578$, $p < 10^{-15}$). While attacks are \emph{statistically} distinguishable from human, they are \emph{security-indistinguishable}: all samples exceed $T$ by ${\geq}\,1.02\times$, and operating-point analysis (\S\ref{sec:operating}) shows no threshold achieves ${<}\,50\%$ LSTM evasion without ${\geq}\,16\%$ FRR. These timing-forgery results are secondary to the copy-type attack, which dominates the security boundary and requires no technical capability.

\subsection{Comparison to Prior Keystroke Attacks}

\begin{table}[t]
\centering
\caption{Comparison with prior keystroke evasion attacks}
\label{tab:prior}
\begin{tabular}{lrrl}
\toprule
\textbf{Attack} & \textbf{Success} & \textbf{Target} & \textbf{Knowledge} \\
\midrule
Tey et al.~\cite{tey2013} & 20--40\% & Identity auth. & Target profile \\
Serwadda~\cite{serwadda2013} & $\sim$50\% & Identity auth. & Population model \\
Monaco~\cite{monaco2016} & $\sim$60\% & Continuous auth. & Target profile \\
Eizagirre et al.~\cite{eizagirre2022} & $>$60\% & Identity auth. & cGAN (gray-box) \\
\midrule
\textbf{Ours: Histogram} & \textbf{100\%} & AI authorship & Black-box \\
\textbf{Ours: Statistical} & \textbf{100\%} & AI authorship & Black-box \\
\textbf{Ours: LSTM} & \textbf{100\%} & AI authorship & Gray-box \\
\textbf{Ours: Copy-type} & \textbf{100\%} & AI authorship & Detector-agnostic \\
\bottomrule
\end{tabular}
\smallskip
\small Prior work targets identity verification (distinguishing users); ours targets authorship verification (distinguishing content origin). Our attacks achieve $2{-}5\times$ higher success rates with weaker adversary assumptions.
\end{table}

The 100\% vs.\ $\sim$60\% gap reflects the structural difference: identity verification exploits inter-individual motor variation, while AI authorship detection must exploit intra-individual cognitive-state differences---too small to separate at any threshold that maintains acceptable false-rejection rates.

\subsection{Operating-Point Analysis}
\label{sec:operating}

We sweep the $\delta$ threshold upward from the automated-detection optimum and report FRR alongside attack pass rate (APR):

\begin{table}[t]
\centering
\caption{Operating-point tradeoff: human FRR vs.\ attack pass rate}
\label{tab:operating}
\begin{tabular}{lrrrr}
\toprule
\textbf{Threshold} & \textbf{Human FRR} & \textbf{Hist.\ APR} & \textbf{Stat.\ APR} & \textbf{LSTM APR} \\
\midrule
0.27 & 0.0\% & 100\% & 100\% & 100\% \\
0.50 & 0.1\% & 100\% & 82.8\% & 93.8\% \\
0.60 & 1.3\% & 97.8\% & 44.0\% & 85.6\% \\
0.70 & 5.9\% & 51.7\% & 8.0\% & 74.0\% \\
0.80 & 16.3\% & 3.5\% & 0.0\% & 59.0\% \\
0.90 & 32.6\% & 0.0\% & 0.0\% & 44.0\% \\
1.00 & 52.7\% & 0.0\% & 0.0\% & 31.2\% \\
\bottomrule
\end{tabular}
\smallskip
\small No threshold reduces LSTM evasion below 50\% without rejecting ${\geq}\,16\%$ of legitimate users. FRR from $n{=}13{,}000$ SBU sessions; APR from attack samples ($n{=}1{,}000$ hist., $n{=}500$ stat., $n{=}500$ LSTM).
\end{table}

Reducing LSTM APR below 50\% requires $T > 0.80$ (FRR $> 16.3\%$); at $T = 0.90$ (FRR = 32.6\%), LSTM still maintains 44\% APR. The copy-type attack passes at all thresholds because its $\delta$ distribution \emph{is} the human distribution.

\textbf{Session-length sensitivity.} Partitioning SBU into short (${<}\,200$ keystrokes), medium (200--500), and long (${>}\,500$) sessions: evasion is 100\% at $T=0.27$ across all bins, with operating-point tradeoffs shifting by ${<}\,2$\% FRR.

\textbf{Cross-corpus validation.} We fix $T = 0.269$ (learned on SBU) and apply it unchanged to 11 public keystroke corpora spanning 154,237 sessions~\cite{killourhy2009,belman2020} (CMU, Aalto-136M, HMOG, IKDD, ProText, KeyRecs, MSU, IJCB-SBU, Stony Brook, KliCKe, Keystroke100): aggregate $\delta = 0.755 \pm 0.205$. Per-corpus means range from $\delta = 0.52$ (CMU fixed-password, highly constrained) to $\delta = 1.14$ (Aalto-136M free-text, reflecting the natural high variability of unconstrained typing). Mobile corpora (HMOG, SU-AIS) show slightly elevated $\delta$ ($\sim$0.85--0.95) due to touchscreen motor noise. The LSTM attack ($\delta = 0.877$) exceeds the threshold on every corpus without per-corpus tuning, confirming that the vulnerability is not an artifact of the SBU population, device type, or collection protocol.

\subsection{Classifier Performance}
\label{sec:classifiers}

Five classifiers (Logistic Regression, Random Forest, Gradient Boosting, SVM-RBF, MLP) trained on the 7-feature vector with 5-fold stratified CV all achieve AUC~$=1.000$ on the human-vs-automated task and ${\geq}\,99.8$\% evasion for all three attacks (100\% for histogram and LSTM, 99.8\% for statistical). Attack samples are classified as human with mean confidence $\geq 0.993$ (binomial test against 50\% chance: $p < 10^{-300}$ for each condition). The ceiling reflects non-overlapping distributions---the classifiers detect motor-signal \emph{absence}, not content \emph{origin}.

\textbf{Temporal structure.} Human IKI sequences exhibit autocorrelation ($\hat{\rho}_1 = 0.087$), skewness (4.46), and leptokurtosis (28.3). Histogram/statistical attacks produce $\hat{\rho}_1 \approx 0$; the LSTM overshoots (0.150). A sequence-model defender could exploit these artifacts, but AR(1) injection ($\alpha \approx 0.3$) patches them. Copy-type is unaffected.

\textbf{Methodology.} Threshold $T = 0.269$ is the equal-error-rate (EER) optimal operating point computed on the human-vs-automated task: sweeping $T$ over $\delta$ values, EER occurs where FAR (automated samples above $T$) equals FRR (human samples below $T$). At $T = 0.269$: FAR~$= 0.0\%$, FRR~$= 0.0\%$, confirming complete separation. Stratified 5-fold CV with class-balanced folds; IIITD-BU validation uses the same threshold without retraining.

\subsection{Feature Ablation}
\label{sec:ablation}

\begin{table}[t]
\centering
\caption{Feature ablation: Cohen's $d$ (Human vs.\ Attack). Bold = $|d| \geq 0.8$ (large effect).}
\label{tab:ablation}
\begin{tabular}{lrrr}
\toprule
\textbf{Feature} & \textbf{Hist.} & \textbf{Stat.} & \textbf{LSTM} \\
\midrule
$\delta$ (CV of IKI) & \textbf{+1.630} & \textbf{+2.219} & +0.578 \\
Mean IKI & +0.456 & $-$0.746 & \textbf{$-$1.718} \\
IKI Variance & +0.579 & +0.275 & $-$0.104 \\
Pause Density & \textbf{+1.242} & \textbf{$-$0.858} & \textbf{$-$1.890} \\
Burst Length & +0.694 & +0.709 & \textbf{+0.825} \\
IKI Entropy & $-$0.534 & \textbf{$-$1.061} & \textbf{$-$0.927} \\
Digraph Var. & \textbf{+0.861} & +0.467 & $-$0.480 \\
\midrule
\multicolumn{4}{l}{\textit{Control: Human vs.\ Automated Injection}} \\
$\delta$ & \multicolumn{3}{r}{\textbf{+6.143}} \\
IKI Entropy & \multicolumn{3}{r}{\textbf{+7.895}} \\
\bottomrule
\end{tabular}
\smallskip
\small No single feature reliably separates all attack types from human, but all features easily separate human from automated injection (control row).
\end{table}

No single feature achieves $|d| \geq 0.8$ consistently across all attacks. Each attack is detectable on some features---histogram on 3/7 ($\delta$, $\rho$, $\gamma$), statistical on 3/7, LSTM on 4/7---but critically, the LSTM evades $\delta$ ($d = 0.578$), the primary discriminator. The control confirms features detect motor-signal \emph{absence} ($\delta$: $d = 6.14$; $H$: $d = 7.90$) but not which motor process generated the signal.

\textbf{Feature importance.} The top-3 discriminators ($H$, $\delta$, $\rho$) all measure temporal variability \emph{absent} in automated injection. Despite individual feature-level detectability, classifiers trained on human-vs-automated still classify all attacks as human (99.8--100\% evasion) because attack features lie within the human distribution, far from the automated baseline. No feature reweighting resolves the vulnerability.

\section{Structural Non-Identifiability of Timing-Only Provenance Detection}
\label{sec:nonident}

\subsection{Limit-Case Non-Identifiability and Operational Consequences}

We formalize the limitation of \emph{timing-only, content-agnostic} detectors (all systems in~\cite{kundu2024,mehta2025}). ``Content-agnostic'' includes digraph conditioning (the theorem conditions on $s$ explicitly); a detector escapes this bound only by analyzing semantic content. The theorem follows from the data processing inequality on:
\[
\text{Prov.} \to \text{Cog.\ State} \to \text{Motor Exec.} \to \tau \to f(\tau)
\]
When A2 severs the Cognitive State $\to$ Motor Execution link, all downstream information is blocked.

\textbf{Assumptions:}
\begin{enumerate}[leftmargin=*]
\item[\textbf{A1.}] \textbf{Observational constraint}: The detector observes only keystroke timing $\tau = \{t_i\}$, not the cognitive state of the typist.
\item[\textbf{A2.}] \textbf{Motor independence} (idealized): For a given character sequence $s$, the distribution of IKIs under the typist's motor function $M_u$ depends on $s$ and $M_u$, not on whether $s$ was composed or copied.
\item[\textbf{A3.}] \textbf{Character-level equivalence}: The copy-type attacker types the same character sequence as would be produced by genuine composition.
\end{enumerate}

\textbf{Breaking assumptions.} A1 is violated by observing revision history, gaze, or paste events. A2 is violated empirically: Kundu et al.\ report $d \approx 1.28$ between fixed-phrase composition and transcription~\cite{kundu2024}, and Alves et al.\ find longer ``thinking pauses'' in composition~\cite{alves2008}. However, this leakage yields Bayes-optimal error $P^*_e = 26.1\%$---insufficient for security decisions---and the IIITD-BU free-text measurement gives only $d = 0.14$ ($P^*_e = 47.2\%$). A3 is violated by challenge-response tasks making the character sequence itself evidence of authorship.

\begin{theorem}[Structural Non-Identifiability Under Timing-Only Observation]
Let $\mathcal{F}$ be the class of feature extractors operating on keystroke timing. Under assumptions A1--A3, for any $f \in \mathcal{F}$ and typist $u$:
\begin{equation}
\label{eq:nonident}
I\bigl(f(\tau)\,;\,\Prov \mid s,\, M_u\bigr) = 0
\end{equation}
where $\Prov \in \{\textit{composed},\, \textit{copied}\}$ and $s$ is the character sequence.
\end{theorem}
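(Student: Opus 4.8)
The plan is to reduce the theorem to a single application of the data processing inequality to a \emph{conditional} distribution; the substantive work is translating assumption A2 into a conditional-independence statement, after which the information-theoretic step is routine. First I would restate A2 in distributional form: for every value of the character sequence $s$ and motor function $M_u$,
\[
P\bigl(\tau \mid s,\, M_u,\, \Prov = \text{composed}\bigr) \;=\; P\bigl(\tau \mid s,\, M_u,\, \Prov = \text{copied}\bigr),
\]
i.e.\ $\tau$ is conditionally independent of $\Prov$ given $(s, M_u)$. This is exactly the formal reading of ``the IKI distribution depends on $s$ and $M_u$, not on how $s$ was produced.'' Assumption A3 is what makes the comparison well-posed: because the copy-type attacker emits the same character sequence as genuine composition would, the event $\{\text{sequence} = s\}$ has identical support under both provenance values, so the two conditional laws are defined on the same space and the equality is meaningful rather than vacuous. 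In terms of the chain $\Prov \to \text{Cog.\ State} \to \text{Motor Exec.} \to \tau \to f(\tau)$, A2 deletes the $\text{Cog.\ State}\to\text{Motor Exec.}$ edge, so that after conditioning on $(s,M_u)$ no directed path connects $\Prov$ to $\tau$.

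Second, from conditional independence I get $I(\tau;\,\Prov \mid s,\, M_u) = 0$ directly from the definition of conditional mutual information. Third, I invoke A1: a timing-only feature extractor $f \in \mathcal{F}$ is a measurable map whose argument is the trace $\tau$ (and at worst the character sequence $s$, on which we are already conditioning --- this is precisely the paper's point that digraph conditioning does not escape the bound). Hence $f(\tau)$ is a deterministic function of variables that are either $\tau$ or conditioned on, and the data processing inequality for conditional mutual information gives
\[
0 \;\le\; I\bigl(f(\tau);\, \Prov \mid s,\, M_u\bigr) \;\le\; I\bigl(\tau;\, \Prov \mid s,\, M_u\bigr) \;=\; 0,
\]
so the quantity vanishes. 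If one wants to allow randomized feature extractors, model the internal randomness as an auxiliary variable independent of $(\tau, s, M_u, \Prov)$; it carries no information about $\Prov$, and adjoining it to the conditioning leaves the inequality --- hence the conclusion --- unchanged. A brief measure-theoretic remark that DPI holds for general (continuous) $\tau$ and arbitrary measurable $f$ closes any loose ends.

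I do not expect a mathematical obstacle here: the proof is essentially two lines of DPI, and the only care needed is bookkeeping about what $f$ may depend on (absorbed by conditioning on $s$) and the honest acknowledgement that conditional mutual information, not marginal, is the right object. The genuine difficulty is the \emph{status} of A2, which the data contradict at the $d\approx 1.28$ scale for fixed-phrase tasks. I would therefore present Theorem~1 as the exact limit case and immediately hand off to the relaxation the excerpt already sketches: replace the exact conditional independence by a bounded divergence between the two conditional laws of $\tau$ and propagate that bound through the same DPI to obtain a Bayes-error floor --- since it is that relaxed statement (the $P^*_e$ numbers and the operating-point tables), rather than Theorem~1 in isolation, that carries the security conclusion.
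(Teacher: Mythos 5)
Your proposal is correct and follows exactly the paper's argument: A2 is read as the conditional independence $\tau \perp \Prov \mid s,\, M_u$, which gives $I(\tau;\Prov \mid s, M_u)=0$, and the data processing inequality transfers this to $f(\tau)$. The extra remarks on A3's role, randomized extractors, and the relaxation to a bounded-divergence Bayes-error floor are sensible elaborations but do not change the route.
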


\begin{proof}
Under A2,
\begin{align*}
P(\tau \mid s,\, M_u,\, \textit{composed})
  &= P(\tau \mid s,\, M_u,\, \textit{copied})\,,
\end{align*}
so $\tau \perp \Prov \mid s,\, M_u$. By the data processing inequality,
\begin{align*}
I\bigl(f(\tau)\,;\,\Prov \mid s,\, M_u\bigr)
  &\leq I\bigl(\tau\,;\,\Prov \mid s,\, M_u\bigr) = 0\,.
\end{align*}
\end{proof}

\noindent Although A2 is idealized, our empirical results show that its violation yields effect sizes ($d \approx 1.28$) that remain deep inside the human acceptance region, making the bound operationally tight: a classifier exploiting this leakage requires ${\geq}\,16.3$\% FRR to reduce attack evasion below 50\% (Table~\ref{tab:operating}).

\noindent\textbf{Lemma (Distinguishability $\not\Rightarrow$ Security).} Statistical distinguishability between composition and transcription does not imply usable security discrimination at acceptable FRR. Formally: $d > 0$ is necessary but not sufficient; operational viability requires
\begin{equation}
\label{eq:bayes_error}
P^*_e \;=\; \Phi\!\biggl(-\frac{d}{2}\biggr) \;\ll\; \epsilon_{\mathrm{deploy}}
\end{equation}
where $\epsilon_{\mathrm{deploy}}$ is the deployment's tolerable error rate.

\textbf{From limit case to operational non-viability.} A2 is idealized; empirically, cognitive load yields $d \approx 1.28$~\cite{kundu2024,alves2008}. By Eq.~\ref{eq:bayes_error}: $P^*_e = 26.1\%$ at $d = 1.28$; $47.2\%$ at $d = 0.14$ (IIITD-BU free-text measurement). LOO 1-NN on IIITD-BU ($n=71$) yields 39.4\% error on $\delta$, confirming that the bound is conservative and that the composition--transcription separation lies entirely inside the human acceptance region for motor-absence detection (Table~\ref{tab:operating}).

\textbf{Security boundary.} Copy-type is not separable at security-relevant error rates (even the Bayes-optimal classifier has $P_{\mathrm{error}} \geq 26\%$, Eq.~\ref{eq:bayes_error}); timing-forgery requires ${\geq}\,16$\% FRR to reduce LSTM evasion below 50\%. Motor-absence detection remains achievable (AUC~$= 1.000$). Content-binding (revision history, challenge-response) can restore provenance verification by violating A1 or A3.

\subsection{Distributional Distance}

\begin{table}[t]
\centering
\caption{Distributional distance from human $\delta$ distribution}
\label{tab:distances}
\begin{tabular}{lrrr}
\toprule
\textbf{Attack} & \textbf{JS Div.} & \textbf{TV Dist.} & \textbf{KS Stat.} \\
\midrule
Histogram & 0.632 & 0.805 & 0.806 \\
Statistical & 0.789 & 0.888 & 0.891 \\
LSTM & 0.108 & 0.275 & 0.253 \\
Copy-Type & 0 & 0 & 0 \\
\bottomrule
\end{tabular}
\smallskip
\small JS divergence computed with $\log_2$, bounded by $[0, 1]$. Distances computed on the marginal $\delta$ distribution; since $\delta$ is the primary discriminator (Table~\ref{tab:ablation}), full-vector distances can only be larger (conservative for the attacker).
\end{table}

By the data processing inequality,
\begin{equation}
\label{eq:dpi_bound}
P_{\mathrm{error}}(C_\delta) \;\geq\; \tfrac{1}{2}\bigl(1 - \TV\bigr)\,,
\end{equation}
giving $P_{\mathrm{error}} \geq 0.363$ for LSTM on $\delta$ alone and $P_{\mathrm{error}} = 0.5$ for copy-type on any feature set. The operating-point analysis (Table~\ref{tab:operating}) confirms these bounds empirically. As generative models improve, timing-forgery distances will shrink toward zero; the copy-type attack already sits at the theoretical limit ($\TV = 0$, irreducible chance-level error).

\section{Defense Directions}
\label{sec:defense}

\subsection{Why Motor Signals Are Insufficient}

Motor-signal verification satisfies only one of three conditions necessary for secure AI authorship detection:

\begin{enumerate}[leftmargin=*]
\item \textbf{Motor authenticity}: Input produced by a human body. \checkmark
\item \textbf{Cognitive engagement}: Human was actively composing. \ding{55}
\item \textbf{Content binding}: Detected process produced submitted content. \ding{55}
\end{enumerate}

Conditions (2)--(3) require signals beyond keystroke timing. \textbf{The defenses below violate the design constraints of~\cite{kundu2024,mehta2025}}: they require revision-history logging, semantic analysis, or challenge-response infrastructure absent from timing-only detectors. We present them as \emph{fundamentally different architectures} accepting different privacy and UX costs.

\subsection{Semantic Coherence as Defense}

The CoAuthor corpus~\cite{lee2022coauthor} (1,446 sessions, humans composing while accepting GPT-3 suggestions) provides a test case:

\begin{table}[t]
\centering
\caption{CoAuthor: Motor signals during AI-collaborative writing}
\label{tab:coauthor}
\begin{tabular}{lrrr}
\toprule
\textbf{Condition} & \textbf{$n$} & \textbf{$\bar{\delta}$} & \textbf{vs.\ Human $d$} \\
\midrule
Low AI accept ($<$50\%) & 274 & 1.172 & --- \\
High AI accept ($\geq$50\%) & 1,172 & 1.248 & --- \\
Pure human (SBU) & 13,000 & 0.987 & --- \\
\midrule
\multicolumn{4}{l}{Low vs.\ High: $d = -0.470$ [$-$0.594, $-$0.345]} \\
\multicolumn{4}{l}{CoAuthor vs.\ SBU: $d = -1.306$} \\
\bottomrule
\end{tabular}
\smallskip
\small AI acceptance rate does not reduce $\delta$; motor signals remain human-like regardless of content origin.
\end{table}

Motor signals remain human-like regardless of AI acceptance rate ($d = -0.470$, wrong direction). Even partial AI involvement produces $\delta \in [0.9, 1.6]$ because the human continues typing between AI segments. Semantic coherence across revisions can detect discontinuities but requires monitoring beyond timing features.

\subsection{Concrete Defense Mechanisms}

Three content-binding approaches resist copy-type, each with distinct deployment costs:

\textbf{Revision-history coherence.} Genuine composition produces non-monotonic editing: writers type a clause, delete half of it, rephrase, insert a word three sentences back, then resume forward progress~\cite{alves2008}. Transcription, by contrast, produces near-monotonic left-to-right text accumulation with corrections limited to typos. A detector could flag sessions where the ratio of content-semantic revisions to motor-error corrections falls below a learned threshold. The cost: full keystroke-level revision logging, raising student privacy concerns in educational deployments, and increased storage (roughly $5{-}10\times$ raw text size).

\textbf{Challenge-response.} Mid-session prompts (e.g., ``Summarize your argument so far in one sentence'' or ``Why did you choose this example?'') force the writer to demonstrate real-time comprehension of their own text. A copy-type attacker who has not internalized the AI-generated content will produce slower, less coherent responses. The cost: interrupting the writing flow, which may itself depress composition quality and $\delta$ values, creating confounds.

\textbf{Micro-revision semantics.} Distinguishing corrections that fix meaning (``affect''$\to$``effect'') from those that fix motor errors (``teh''$\to$``the'') requires NLP integration---a fundamentally different system architecture than timing-only measurement. The cost: latency, computational overhead, and vulnerability to adversaries who deliberately introduce and correct semantic errors to mimic composition patterns.

Each defense accepts privacy, UX, or complexity costs absent from timing-only systems. Their combination raises the adversary's effort until simulating composition approaches performing it---but no single mechanism is sufficient against an adaptive attacker.

\section{Discussion}

\noindent\textbf{Scope and the provenance gap.} These results apply exclusively to \emph{timing-only AI authorship detection} as proposed in~\cite{kundu2024,mehta2025,roh2025korean,roh2025auth}. Those systems' high reported accuracy reflects motor-presence detection---not content provenance---and their threat models uniformly omit the copy-type adversary. In our experience building keystroke analysis systems, the gap between what timing signals actually measure (that a human body was present) and what stakeholders assume they prove (that the human originated the content) is the central source of misplaced confidence. Keystroke biometrics remain viable for identity verification, liveness detection, and confirming a human operator; they fail precisely when the adversary's goal is not to impersonate a different person but to launder someone else's words through their own fingers. We disclosed findings to the authors of~\cite{kundu2024} prior to submission.

\subsection{Achievable vs.\ Non-Viable Properties}

Timing-only detectors \emph{can} provide motor-presence attestation (AUC~$= 1.000$) and liveness detection (timestamp monotonicity). They \emph{cannot} provide cognitive-origin verification ($d \approx 1.28$, $P_{\mathrm{error}} \geq 26\%$) or AI-involvement detection under the copy-type threat model. Prior work~\cite{kundu2024,mehta2025} implicitly treats the first two as evidence for the latter two.

\subsection{Broader Impact}

The copy-type attack succeeds across all deployment contexts; lockdown browsers block timing-forgery but not copy-type. Concretely: exam-proctoring vendors integrating keystroke analytics into LMS platforms cannot distinguish a student composing an essay from one transcribing a ChatGPT draft displayed on a phone beneath the desk. Academic integrity systems marketing ``AI-free'' certification from typing patterns provide false assurance---the certification is unfalsifiable by the very signal they measure. In forensic contexts, keystroke evidence presented as proof of authorship would not survive cross-examination by an expert witness aware of copy-type; the probative value is limited to confirming motor presence. Vendors deploying these systems owe their institutional clients an honest capability statement: this technology detects bots, not ghostwriters.

\subsection{Limitations}

\begin{enumerate}[leftmargin=*,nosep]
\item \textbf{Copy-type user study.} We validate using Kundu et al.'s IIITD-BU ($n=34$, 100\% bypass, Clopper-Pearson 95\% CI: [89.7, 100]\%, min $\delta = 0.703 = 2.61\times$ threshold), supplemented by ProText/HMOG transcription corpora ($n=879$ pooled, bypass 100\%, CI: [99.58, 100]\%) and sensitivity analysis ($r \in [1.1, 2.0]$, bypass ${\geq}\,99.7\%$ at $T=0.27$). A larger IRB study would refine effect sizes but cannot alter the non-identifiability result: the theoretical bound (Theorem~1) is independent of sample size.

\item \textbf{Web-based capture.} SBU uses JavaScript events; measurement noise affects both conditions equally (conservative for the attacker).

\item \textbf{Single modality.} Claims apply to timing-only systems~\cite{kundu2024,mehta2025}; multi-modal systems (mouse, gaze, revision history) are expected more robust.

\item \textbf{No adaptive defense.} Adversarial training may detect histogram/statistical artifacts (autocorrelation $\hat{\rho}_1 \approx 0$ vs.\ human 0.087), but AR(1) injection patches this (\S\ref{sec:classifiers}). More fundamentally, no adaptive defense can address copy-type, whose motor signals are genuine by construction---the attack lies on the human manifold.

\item \textbf{Single corpus.} SBU ($n=13{,}000$, 1,060 users) is primary; cross-corpus validation on 11 datasets confirms generalizability (\S\ref{sec:operating}).
\end{enumerate}

\subsection{Future Work}

\textbf{Content-binding systems.} Revision-coherence detectors need ground-truth datasets of both genuine composition and adversarial transcription with deliberate ``fake revisions'' (e.g., inserting and deleting words to simulate exploration). The adversarial design space is richer than for timing-forgery because revision semantics are harder to forge convincingly.

\textbf{Within-subject copy-type studies.} An IRB-approved protocol should recruit ${\geq}\,100$ participants, each producing both a genuine essay and a transcription of an LLM-generated essay on the same topic, within-subject. The key design challenge: preventing participants from inadvertently editing the AI text (which would conflate copy-type with collaborative writing). Screen-recording and text-diff analysis can verify compliance. Such a study would tighten the $d = 0.14$ estimate from IIITD-BU and clarify whether ecological factors (screen-switching, unfamiliar vocabulary) introduce detectable hesitation patterns absent in laboratory transcription.

\textbf{Multi-modal fusion.} Gaze tracking during composition should show rereading of previously-written text (self-monitoring); during transcription, gaze patterns should show systematic left-to-right reading of the source. Whether this signal survives peripheral-vision transcription (phone below desk, memorized paragraphs) remains an open question. Each direction addresses a specific assumption boundary (A1, A2, A3) from our framework.

\section{Conclusion}

The architectural error in keystroke-based AI authorship detection is treating a body-produced signal as evidence of a mind-produced text. Motor features measure neuromuscular execution; they are agnostic to whether the typist composed or transcribed the content. In the idealized limit (A1--A3), Theorem~1 gives information-theoretic non-identifiability: mutual information between timing features and provenance is exactly zero. Empirically, the small composition--transcription leakage ($d \approx 1.28$) violates A2, but this still yields Bayes error ${\geq}\,26\%$---operationally non-viable for any deployment requiring actionable confidence. Five classifiers confirm the operational consequence: AUC~$= 1.000$ against motor-absent injection, yet ${\geq}\,99.8$\% evasion the moment an attacker introduces any human motor signal---genuine or forged.

The path forward is not better timing features but a different category of evidence entirely. Provenance verification requires signals that are entangled with the semantic content being produced: revision trajectories that reflect iterative refinement, challenge-response protocols that force on-the-fly composition, or cryptographic commitments binding intermediate drafts to final output. Until such systems are deployed, keystroke timing confirms only that a human was present---not that the human was the author.

\section*{Reproducibility}

\textbf{Data}: SBU Keystroke Corpus~\cite{kundu2024}; IIITD-BU~\cite{kundu2024}; CoAuthor~\cite{lee2022coauthor}; 11 cross-corpus datasets (CMU~\cite{killourhy2009}, Aalto-136M~\cite{dhakal2018}, et~al.).
\textbf{Code}: arXiv ancillary files (\texttt{anc/}).
\textbf{Protocol}: 5-fold stratified CV; seed=42; StandardScaler per fold.
\textbf{Classifiers}: LR ($C{=}1$), RF (100 trees), GBT (100 est., lr${}=0.1$), SVM-RBF ($C{=}1$), MLP ($[64,32]$).
\textbf{LSTM}: 2-layer (64 units), char embed${}=32$, MDN (5 comp.), Adam lr${}=10^{-3}$, 5 epochs.
\textbf{Features}: outlier trim ${>}10\times$ mean; pause ${>}500$\,ms; burst ${<}150$\,ms; entropy 50 bins $[0, 2000]$\,ms.

\bibliographystyle{IEEEtran}

\end{document}